\documentclass{article} \usepackage{amsthm} \usepackage{amsmath}

\usepackage{mathtools}

\usepackage{amssymb}
\usepackage{bbm}
\usepackage{xcolor}
\usepackage{comment}
\usepackage{tikz, tikz-3dplot}
\usepackage[T1]{fontenc}
\usepackage[linesnumbered,lined,commentsnumbered,noend,boxed]{algorithm2e}

\SetCommentSty{FontInPseudocodeComments}

\usetikzlibrary{arrows}
\pgfarrowsdeclarecombine*{spaced stealth'}{spaced stealth'}{stealth'}{stealth'}{space}{space}
\tikzset{>=spaced stealth'}

\newcommand*{\T}{^{\mkern-1.5mu\mathsf{T}}}

%%% FE ->
\usepackage{a4}
\usepackage{todonotes} 
\usepackage{utf8math}
\usepackage{enumitem} 
\usepackage{mathrsfs}

\newcommand{\norm}[1]{  \lVert #1 \rVert}
\newcommand{\ceil}[1]{  \lceil #1 \rceil}
\newcommand{\floor}[1]{\lfloor #1\rfloor} 

\DeclareMathOperator{\poly}{poly}
\DeclareMathOperator{\polylog}{polylog}
\DeclareMathOperator{\supp}{supp}

\newcommand{\Z}{\mathbb{Z}}
\newcommand{\ZZ}{\Z_{\geq 0}}
\newcommand{\R}{\mathbb{R}}
\newcommand{\RR}{\R_{\geq 0}}
\newcommand{\lex}{\prec_{\mathrm{lex}}}

\usepackage[capitalize]{cleveref}
% <- FE

\newtheorem{theorem}{Theorem}
\newtheorem{lemma}[theorem]{Lemma}

\newtheorem{definition}[theorem]{Definition}

\title{Space-Efficient Algorithm for Integer Programming with Few Constraints}

% \author{Anonymous Authors}
\author{Lars Rohwedder\footnote{Maastricht University, Maastricht, Netherlands. Supported by Dutch Research Council (NWO) project “The Twilight
Zone of Efficiency: Optimality of Quasi-Polynomial Time Algorithms” [grant number OCEN.W.21.268]} \and Karol
W\k{e}grzycki\footnote{Saarland University and Max Planck Institute for Informatics,
        Saarbr\"ucken, Germany. 
    This work is part of the project TIPEA that has
    received funding from the European Research Council (ERC) under the European Unions Horizon
2020 research and innovation programme (grant agreement No. 850979).}
}

\date{}

\begin{document}
\maketitle
\begin{abstract}
  \noindent
  Integer linear programs $\min\{c\T x : A x =
  b, x \in \mathbb{Z}^n_{\ge 0}\}$, where $A \in \mathbb{Z}^{m \times n}$, $b
  \in \mathbb{Z}^m$, and $c \in \mathbb{Z}^n$, can be
  solved in pseudopolynomial time for any fixed number of constraints $m = O(1)$.
  More precisely, in time $(m\Delta)^{O(m)} \poly(I)$,
  where $\Delta$ is the maximum absolute
  value of an entry in $A$ and $I$ the input size.

  Known algorithms rely heavily on dynamic programming, which
  leads to a space complexity of similar order of magnitude as the running time.
  In this paper, we present a polynomial space algorithm that solves integer linear programs in
	$(m\Delta)^{O(m (\log m + \log\log\Delta))} \poly(I)$ time, that is,
  in almost the same time as previous dynamic programming algorithms.
\end{abstract}

\section{Introduction}

We consider integer linear programs (ILP) in equality form
\begin{equation}\label{eq:ilp}
		\min \left\{ c\T x : A x = b, x \in \ZZ^n \right\}
\end{equation}
where $A \in \Z^{m \times n}$, $b \in \Z^m$, and $c \in \Z^n$. 
Integer linear programs are a widely used model to describe problems in combinatorial optimization and
the algorithms available for them are used as general purpose optimization suites.
Unlike its continuous counterpart, where $x\in \RR^n$, integer linear programming is NP-complete even in the case of $m=1$. Hence, theoretical research is mainly focused on identifying structural assumptions, under which ILPs can still be
solved efficiently. Here, two dominant directions can be distinguished.

Based on the geometry of numbers, Lenstra~\cite{lenstra1983integer} showed that ILPs of the form $\min\{c\T x : A x \le b, x\in \Z^n\}$ can be solved in polynomial time for constant $n$, in fact, that they are fixed-parameter tractable (FPT) in parameter $n$. 
Note that this form is equivalent to~\eqref{eq:ilp} by a standard transformation, but this would led to an increase in
parameter $n$.
Subsequent improvements by Kannan~\cite{kannan1983improved}
and most recently Reis and Rothvoss~\cite{reis23} lead to the currently best running time of $(\log n)^{O(n)} \poly(I)$, where $I$ is the input size.

The second direction is mainly based on dynamic programming approaches and goes
back to Papadimitriou~\cite{papadimitriou1981complexity}. He considered~\eqref{eq:ilp} in the case when the absolute values of the entries of $A$ and $b$ are bounded by $\Delta$ and designed a pseudopolynomial time algorithm when $m$ is fixed. This generalizes the classical pseudopolynomial time algorithm for Knapsack problems~\cite{bellman1966dynamic}. Later, his algorithm was improved by Eisenbrand and Weismantel~\cite{eisenbrand2019proximity}. The current best algorithm is due to Jansen and Rohwedder~\cite{jansen23} with a running time of $O(\sqrt{m}\Delta)^{2m} \poly(I)$.
In the latter two results the bound of $\Delta$ is only required on $A$, but not $b$.
Similar approaches were also applied successfully to constraint matrices $A$, where both dimensions $n$ and $m$ can
be large but satisfy certain structures of their non-zero entries, see e.g.~\cite{CslovjecsekEHRW21}.

\subparagraph*{Space efficiency of integer linear programs.}
NP-completeness tells us that any algorithm that solves general ILPs needs to have an exponential running
time at least in some parameter unless P$=$NP. But not only running time, also space usage can be a limiting factor.
It is well known that for practical ILP solvers based on Branch-and-Bound it is often a bottleneck to store
the list of active nodes, see e.g.~\cite[Section~7.5.1]{wolsey2020integer}.
This motivates the theoretical question of whether the aforementioned
results can also be achieved in polynomial space or, more generally, what tradeoffs between running
time and space one can achieve.

Frank and Tardos~\cite{frank1987application} showed that Lenstra's algorithm and the improvement by Kannan
with running time $n^{O(n)}\poly(I)$ can be implemented in polynomial space.
For the improvement by Reis and Rothvoss this is not known.

Regarding the algorithms based on dynamic programming, the space usage is also pseudopolynomial, which
is inherent in the method. Alternative, space-efficient methods are known for Subset Sum, a very
restricted variant of ILP, where the task is to find a solution to
$\{x \in \{0,1\}^n : a\T x = b \}$ with $a \in \ZZ^n$ and $b \in \ZZ$. Note that here $m=1$ and there is
no objective function.
For the Subset Sum problem, Lokshtanov and Nederlof~\cite{algebraization10} introduced an algebraization technique and designed an $\tilde{O}(n^3 b \log{b})$ time and $\tilde{O}(n^2)$ space algorithm (see also~\cite{kane2010unary,elberfeld2010logspace}). This algorithm was later improved by Bringmann~\cite{DBLP:conf/soda/Bringmann17}, and the current best low-space algorithm for Subset Sum is due to Jin et al.~\cite{jin2021fast}, which runs in $\tilde{O}(nb)$ time and $O(\log{n} \log\log{n} + \log{b})$ space.
Despite the intensive research on polynomial space algorithms for Subset Sum, little is known for
more general problems.
The seminal result of Lokshtanov and Nederlof~\cite{algebraization10} is the only one that considers the more general Knapsack problem, that is, $\max\{c\T x : a\T x \le b, x\in\{0,1\}^n\}$ with $c, a\in\ZZ^n$, and $b\in \ZZ$. Nevertheless, their method suffers from pseudopolynomial factor in both $b$ and $\norm{c}_{\infty}$ in the running time, see~\cite[Theorem 4.5]{algebraization10}, whereas the classical dynamic programming algorithm only requires pseudopolynomial time in one of the two parameters. A $\poly(n, b)$ time and polynomial space algorithm is widely open for the Knapsack problem and Unbounded Knapsack (the variant with $x\in \ZZ^n$)~\cite{nederlof2024}.

Savitch's theorem can be used to avoid the high space usage of dynamic programming, but often comes at a significant
cost for the running time.
In the case of~\eqref{eq:ilp}, this method leads to an algorithm in time $(m\Delta)^{O(m^2 \log(m\Delta))} \poly(I)$
and polynomial space, see \Cref{sec:techniques} for details.

\subparagraph*{Our contribution.}
In this paper, we show that integer programming~\eqref{eq:ilp} can be solved in polynomial space
and a running time that nearly matches the best known running time of the dynamic programming approach by~\cite{eisenbrand2019proximity, jansen23}.
\begin{theorem}\label{thm:main}
	Let $A \in \Z^{m \times n}$, $b \in \Z^m$, and $c \in \Z^n$ and
	suppose that the absolute value of each entry of $A$ is bounded by
	$\Delta > 1$. Then, in time 
	\begin{equation*}
		(m\Delta)^{O(m(\log{m}+ \log\log{\Delta}))} \cdot \poly(I)
	\end{equation*}
	and polynomial space, one can compute a solution
	to~\eqref{eq:ilp} or conclude that such a solution does not exist.
	Here, $I$ is the encoding size of $A, b$, and $c$.
\end{theorem}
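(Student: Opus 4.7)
The plan is to simulate the dynamic programming algorithm of Eisenbrand--Weismantel and Jansen--Rohwedder in polynomial space, by replacing its explicit DP table with a Savitch-style divide-and-conquer on a carefully chosen walk representation of integer feasible solutions. Three ingredients are needed: (i) a proximity reduction that brings the effective right-hand side down to norm $O(m\Delta)$ while using only polynomial space; (ii) the Steinitz lemma, which allows us to confine the walk to a box of size $(m\Delta)^{O(m)}$; and (iii) a short-walk representation of a solution, so that Savitch's recursion has depth only $O(\log m + \log \log \Delta)$ rather than the $O(m \log(m\Delta))$ obtained from a direct application to the usual DP.

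\emph{Step 1: Reduction to feasibility with small right-hand side.} First reduce optimization to $O(\log(n \norm{c}_\infty))$ feasibility queries by binary search over the objective $c\T x \leq C$; each query adds a single constraint (absorbed in $O(\cdot)$). For each feasibility query I would compute an optimal vertex $x^*$ of the LP relaxation using the polynomial-space LP solver of Frank--Tardos, and apply the Eisenbrand--Weismantel proximity bound: there exists an integer optimum of the form $x^I = \floor{x^*} + z$ with $\norm{z}_1 \leq m(2m\Delta+1)^m$. The residual task is then to find such a $z$, which must satisfy $Az = b'$ for $b' := b - A\floor{x^*}$ and $\norm{b'}_\infty \leq m\Delta$.

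\emph{Step 2: Walk formulation and Savitch.} View $z$ as a walk from $0$ to $b'$ in $\Z^m$ whose steps are (signed) integer multiples of columns of $A$. By Steinitz' lemma, this walk can be reordered so that every intermediate partial sum lies in a box of size $O(m\Delta)$, giving a reachability graph on $N = (m\Delta)^{O(m)}$ vertices. Savitch's theorem solves $0 \to b'$ reachability in time $N^{O(\log L)}$ and polynomial space, where $L$ is the walk length; to match the claimed running time I need $\log L = O(\log m + \log \log \Delta)$, i.e., $L = \poly(m,\log\Delta)$.

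\emph{Step 3: Keeping the walk short, and the main obstacle.} The naive one-column-per-step walk has length $\norm{z}_1 = (m\Delta)^{O(m)}$, which reproduces only the polynomial-space Savitch bound $(m\Delta)^{O(m^2\log(m\Delta))}$ mentioned in the introduction. To shorten it I would combine the Eisenbrand--Shmonin support bound $|\supp(z)| = O(m\log(m\Delta))$ with a binary decomposition of the multiplicities $z_j = \sum_i 2^i b_{j,i}$, expressing $z$ as a sum of only $L = O(m^2 \log^2(m\Delta))$ column-multiples; then indeed $\log L = O(\log m + \log\log\Delta)$. The principal obstacle is that these enlarged chunks have large norm and a priori no longer satisfy the Steinitz-type bound needed to keep the reachability graph at size $(m\Delta)^{O(m)}$: large super-steps shorten the walk but inflate the state space, while small steps do the opposite. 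Overcoming this tension is the technical heart of the argument; the most natural route is a multiscale Steinitz-style decomposition, or an interleaving of large binary chunks with small single-column correction steps, that at every level keeps the reachable states inside an $(m\Delta)^{O(m)}$ box while still letting Savitch commit to only $O(\log m + \log\log\Delta)$ midpoints in total.
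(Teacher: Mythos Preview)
Your approach is genuinely different from the paper's, and the obstacle you yourself flag in Step~3 is exactly where it fails. Once you pass to $L = O(m^2\log^2(m\Delta))$ super-steps of the form $2^i A_j$, individual step norms reach $(m\Delta)^{O(m)}$, so any Steinitz-type reordering only confines prefix sums to a box of side $(m\Delta)^{O(m)}$; the Savitch state space then has $(m\Delta)^{O(m^2)}$ vertices and the recursion costs $(m\Delta)^{O(m^2(\log m+\log\log\Delta))}$. The ``multiscale'' fix you gesture at does not obviously help: processing scales $0,\dots,k$ in order leaves a partial sum $A\cdot(z\bmod 2^{k+1})$ of norm up to $\Delta\cdot|\supp(z)|\cdot 2^{k+1}$, which at the top scale is again $(m\Delta)^{O(m)}$. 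There is also a secondary gap in Step~1: adding the constraint $c\T x\le C$ to the system replaces $\Delta$ by $\max(\Delta,\norm{c}_\infty)$, so the running time becomes pseudopolynomial in $\norm{c}_\infty$; the paper explicitly lists this as a defect of algebraization-style reductions and avoids it.

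The paper sidesteps both issues by never reducing to feasibility and never invoking Steinitz. Its key observation (\Cref{lem:2}) is that if $x^\ast$ is the lexicographically minimal optimum, then the indicator vector $x'$ of $\supp(x^\ast)$ is itself the lexicographically minimal optimum for the right-hand side $b'=Ax'$, which satisfies $\norm{b'}_\infty\le\Gamma\Delta$ and $\norm{x'}_1\le\Gamma$. Because $\norm{x'}_1$ is small, a Savitch-style recursion that halves the \emph{support} and guesses the intermediate right-hand side $b_\ell$ needs only $\norm{b_\ell}_\infty\le\Gamma\Delta$, i.e., branching factor $(2\Gamma\Delta+1)^m=(m\Delta)^{O(m)}$ and depth $O(\log\Gamma)=O(\log m+\log\log\Delta)$. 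This enumerates $(m\Delta)^{O(m)}$ candidate supports; on each (of size at most~$\Gamma$) the paper then runs Lenstra--Kannan in $\Gamma^{O(\Gamma)}\poly(I)$ time and polynomial space. The reduction to $\{0,1\}$-solutions is precisely what collapses the midpoint state space from $(m\Delta)^{O(m^2)}$ to $(m\Delta)^{O(m)}$, and it is the idea your walk formulation is missing.
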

Compared to the running time in~\cite{eisenbrand2019proximity, jansen23}, this is an additional factor of
$O(\log m + \log\log\Delta)$ in the exponent of $m\Delta$, whereas all known techniques lead to an additional factor of at
least $O(m\log\Delta)$, see \Cref{sec:techniques}.

\section{Overview of previous and new techniques} \label{sec:techniques}
Let us briefly introduce some notation and the model of computation.
Throughout this paper, we assume all logarithms are base 2. We write $\tilde{O}(f) = O(f \cdot \polylog(f))$. Further, we make use of the shorthand notation $[n] = \{1, \ldots, n\}$. For a vector $x \in \Z^n$, let $\norm{x}_p$ denote the $\ell_p$ norm of vector $x$. In the special case of $p = 0$, recall that $\norm{x}_0 = |\supp(x)|$, where $\supp(x) = \{ i \in [n] : x_i \neq 0\}$ are the non-zero coefficients.

For two vectors $x,y \in \ZZ^n$, we say that $x$ is lexicographically smaller than $y$ ($x \lex y$) if there exists an index $i \in [n]$ such that
(i) $x_j = y_j$ for every $j < i$, and (ii) $x_i < y_j$.

As in related works, see~\cite{jin2021fast}, we work with a random-access model with word length $\Theta(\log{n} + \log{W})$, where $W$ is the largest absolute value of an input integer. We measure space complexity in terms of the total number of words in the working memory.
We say that an algorithm runs in polynomial space if the space complexity is bounded by a polynomial in the
input size, that is, in $\poly(I) := \poly(n,m,\log(\Delta), \log(\norm{b}_\infty), \log(\norm{c}_\infty))$.
Similarly, the running time of an algorithm is measured in the number of arithmetic operations
on words.

Before we explain the ideas behind~\cref{thm:main} we will briefly review
previous techniques used to achieve polynomial space dependency.

The standard tool to reduce space complexity usage are the ideas in Savitch's theorem, which transforms
algorithms with non-deterministic space complexity to slightly higher deterministic space complexity.
Here, the key technique is to decide $(s, t)$-reachability in a directed graph with $N$ vertices in
$N^{O(\log{N})}$ time and $O(\log N)$ space. This is straight-forward by recursively
guessing the middle vertex of the path from $s$ to $t$. The space complexity comes from the $O(\log N)$
vertices stored in the stack.
This can easily be adapted to finding the shortest
path in a weighted graph in the same time and space.
One can already apply this to ILP. Namely,
the algorithm of Eisenbrand and Weismantel~\cite{eisenbrand2019proximity}
constructs a directed acyclic graph with $N = (m \Delta)^{O(m)} \cdot \norm{b}_{\infty}$ vertices and
then computes the shortest path between two of the vertices. 
With standard polynomial time preprocessing described in~\cite{jansen23}, one can reduce $\norm{b}_{\infty}$ to $(m\Delta)^{O(m)}$.
This shortest path query is the main source of the high space usage of~\cite{eisenbrand2019proximity} and
using Savitch's theorem, the ILP can therefore be solved in $N^{O(\log{N})} =
(m\Delta)^{O(m^2 \log(m\Delta))}$ time and $O(\log{N}) = O(m \log(m\Delta))$ space.

A more intricate technique is the algebraization technique~\cite{algebraization10}, which relies on the following coefficient test lemma.
\begin{theorem}[cf.\ Lemma 3.2 in~\cite{jin2021fast}]\label{lem:test}
	Consider a polynomial $f(x) = \sum_{i=0}^d \alpha_i x^i \in \mathbb F[x]$ of degree at most $d$
	and assume there is an algorithm running in time $T$ and space $S$ for
	evaluating $f(a)$ for a given $a \in \mathbb{F}$. Then, there
	is an algorithm running in $\tilde O(d T)$ time and $\tilde O(S)$ space
	that for a given $t \in \{0,\ldots,d\}$ decides if $\alpha_t = 0$. 
\end{theorem}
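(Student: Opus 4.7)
The plan is to reconstruct $\alpha_t$ via polynomial interpolation from $d+1$ evaluations of $f$, but to do so in a streaming fashion so that at no time are more than a few field elements held in memory besides the $S$ words used by a single oracle call. Concretely, I fix $d+1$ distinct points $a_0,\ldots,a_d \in \mathbb{F}$ (passing to a small extension if $|\mathbb{F}|$ is too small) and use the Lagrange expansion
\[
\alpha_t \;=\; \sum_{i=0}^{d} c_i \cdot f(a_i), \qquad c_i \;=\; [x^t]\prod_{j\neq i}\frac{x-a_j}{a_i-a_j}.
\]
The algorithm keeps a single accumulator; it loops $i=0,1,\ldots,d$, in iteration $i$ computes $c_i$, calls the oracle to obtain $f(a_i)$ (in time $T$ and space $S$), adds $c_i f(a_i)$ to the accumulator, and frees all other workspace before moving on. Testing $\alpha_t = 0$ is then just a comparison of the final accumulator to $0$.

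The place where this nearly fails is the subroutine for $c_i$: a generic Lagrange weight is a quotient involving an elementary symmetric polynomial of $d$ field elements, which naively appears to require $\Theta(d)$ words of memory. To sidestep this, I would take $a_i = \omega^i$ with $\omega$ a primitive $(d+1)$-th root of unity; the DFT inversion formula then specializes the weights to
\[
\alpha_t \;=\; \frac{1}{d+1}\sum_{i=0}^{d} \omega^{-it} \cdot f(\omega^i),
\]
so each $c_i = \omega^{-it}/(d+1)$ can be produced on the fly in $O(\log d)$ time and $O(1)$ extra words using fast exponentiation. If $\mathbb{F}$ does not contain such an $\omega$, one passes to an extension of degree $O(\log d)$ over $\mathbb{F}$, or---which is sufficient for all intended applications---carries out the whole test modulo an auxiliary prime $p = \poly(d)$ with $(d+1)\mid p-1$, found by enumeration in $\poly(d)$ time.

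Putting this together, the running time is $(d+1)\cdot(T + \tilde{O}(1)) = \tilde{O}(dT)$ and the space is $\tilde{O}(S)$, accounting for the accumulator, the loop counter, a handful of words holding the current $\omega^i$ and $\omega^{-it}$, and the $S$ words used internally by one oracle invocation. The main obstacle is really just the low-space computation of interpolation weights; the root-of-unity choice of evaluation points is precisely what turns those weights into quantities that can be streamed rather than tabulated.
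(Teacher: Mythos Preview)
The paper does not give its own proof of this statement; it is quoted as a known result (``cf.\ Lemma~3.2 in~\cite{jin2021fast}'') and used as a black box in the discussion of prior techniques. So there is nothing in the paper to compare your argument against.

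That said, your sketch is correct and is precisely the standard argument behind this lemma in the cited literature: evaluate $f$ at $(d{+}1)$-th roots of unity and recover $\alpha_t$ via the inverse DFT formula $\alpha_t = \tfrac{1}{d+1}\sum_i \omega^{-it} f(\omega^i)$, accumulating the sum in $O(1)$ extra words while reusing the oracle's $S$-word workspace for each evaluation. Your observation that arbitrary Lagrange weights would cost $\Theta(d)$ space, and that the root-of-unity choice collapses them to on-the-fly computable powers, is exactly the point. The only caveat worth flagging is that for a completely general field $\mathbb{F}$ (in particular when $\operatorname{char}\mathbb{F}$ divides $d{+}1$) no primitive $(d{+}1)$-th root of unity exists even in an extension, so the ``pass to an extension'' fallback does not cover every case; your remark that the mod-$p$ reduction suffices for all intended applications is the honest resolution, and matches how the cited references actually use the lemma.
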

The idea to solve the Subset Sum problem, is to consider the polynomial $p(x) =
\prod_{i=1}^n (1+x^{a_i})$ and use~\cref{lem:test} to test whether the coefficient
$\alpha_t$ is nonzero, where $a_1,\ldots,a_n \in \ZZ$ are the input integers,
$t \in \ZZ$ is the target, and $\alpha_t$ is a coefficient in front of monomial
$x^t$ of polynomial $p$. Note that the polynomial $p(x)$ can be evaluated in
$\poly(n, \norm{a}_{\infty})$ time, which, combined with~\cref{lem:test} yields the result
of~\cite{algebraization10}. 

This technique can be applied to check feasibility of~\eqref{eq:ilp}
by aggregating the $m$ constraints of $A$ into a single one $a\T x = b'$ where
$\norm{a}_{\infty} = (m\Delta)^{O(m^2)}$, see e.g.~\cite{kannan1983polynomial},
and reducing the non-negative integer variables to binary ones,
which can be done by applying proximity techniques~\cite{eisenbrand2019proximity}.
We omit the details here, since
the method has two inherent disadvantages for our application:
it has a higher running time with an additional factor of $O(m)$ in the exponent compared to~\cite{eisenbrand2019proximity,jansen23} and
to use~\cref{lem:test} to solve
the optimization variant in~\eqref{eq:ilp} one would need to additionally encode the
coefficient of the objective function $c$ in $a$, which would yield a pseudopolynomial running time
in $\norm{c}_\infty$ that is not necessary in~\cite{eisenbrand2019proximity,jansen23}.

Another straight-forward approach is based on the notable property of~\eqref{eq:ilp} that it always has sparse optimal solutions.
\begin{theorem}[cf.\ Corollary 5 in~\cite{caratheodory06}]\label{caratheodory}
	If~\eqref{eq:ilp} is bounded and feasible, then there exists an optimal solution $x^\ast \in \ZZ^n$ such that
	\begin{displaymath}
		\norm{x^\ast}_0 \le  \Gamma,
	\end{displaymath}
	where $\Gamma := 2(m+1)(\log(m+1) + \log{\Delta} + 2)$. Moreover $x^\ast$ is the lexicographically minimal solution.\footnote{Eisenbrand and Shmonin~\cite{caratheodory06} did not state that $x^\ast$ is the lexicographically minimal solution, but this can be derived easily: 
	one can replace the objective by $c'_i = c_i + \epsilon^i$ for each $i\in [n]$, where $\epsilon > 0$ is sufficiently small. This simulates a lexicographical tie-breaking rule.}
\end{theorem}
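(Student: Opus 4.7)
The plan is to pick the lexicographically smallest optimal solution $x^*$ (made unique by a symbolic perturbation of $c$) and to derive a contradiction from the assumption $\norm{x^*}_0 > \Gamma$ by exhibiting a short zero-sum among the columns of $A$ indexed by $\supp(x^*)$. This zero-sum will yield two feasible integer perturbations $x^* \pm z$ flanking the supposed optimum, which is impossible once optimality is unique.

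First, I would replace $c$ by $\tilde c$ with $\tilde c_i := c_i + \eps^{\,i}$, where $\eps > 0$ is treated symbolically as an infinitesimal (equivalently, any sufficiently small positive rational). Then the unique minimizer of $\tilde c\T x$ over the feasible set coincides with the lexicographically smallest $c$-optimum; call it $x^*$, and assume for contradiction that $k := \norm{x^*}_0 > \Gamma$. Let $J := \supp(x^*)$, so that $x^*_i \ge 1$ for every $i \in J$.

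The core step is a pigeonhole over $0/1$-combinations of the columns indexed by $J$. There are $2^k$ subsets $S \subseteq J$, while each vector $A\mathbf{1}_S \in \Z^m$ has coordinates in $[-k\Delta, k\Delta]$ and hence lies in a set of size at most $(2k\Delta + 1)^m$. The value $\Gamma = 2(m+1)(\log(m+1) + \log\Delta + 2)$ is calibrated so that $\Gamma > m\log(2\Gamma\Delta + 1)$; this in turn gives $2^k > (2k\Delta + 1)^m$ for every integer $k > \Gamma$, and pigeonhole produces distinct subsets $S_1 \ne S_2$ of $J$ with $A\mathbf{1}_{S_1} = A\mathbf{1}_{S_2}$. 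Setting $z := \mathbf{1}_{S_1} - \mathbf{1}_{S_2}$ gives a nonzero integer vector with $Az = 0$ and $\supp(z) \subseteq J$; since $x^*_i \ge 1$ on $J$, both $x^* + z$ and $x^* - z$ are feasible non-negative integer solutions distinct from $x^*$. Uniqueness of the $\tilde c$-optimum would force $\tilde c\T(x^* + z) > \tilde c\T x^*$ and $\tilde c\T(x^* - z) > \tilde c\T x^*$, i.e., $\tilde c\T z > 0$ and $\tilde c\T z < 0$ simultaneously, a contradiction.

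The main obstacle is the arithmetic verification that $\Gamma \ge m\log(2\Gamma\Delta + 1)$. Expanding, $\log(2\Gamma\Delta + 1) \le \log\Gamma + \log\Delta + O(1)$ and $\log\Gamma \le \log(m+1) + \log(\log(m+1) + \log\Delta + 2) + O(1)$, so the right-hand side is $m\log(m+1) + m\log\Delta + O(m\log\log(m\Delta))$. This is dominated by $\Gamma = 2(m+1)(\log(m+1) + \log\Delta + 2)$: the factor $(m+1)$ instead of $m$ and the additive $+2$ supply precisely the slack needed to absorb the double-logarithmic term from $\log\Gamma$.
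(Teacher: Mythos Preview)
The paper does not actually prove this theorem: it cites Corollary~5 of Eisenbrand--Shmonin for the sparsity bound and only adds the footnote explaining why the \emph{lexicographically minimal} optimum satisfies the same bound, via the perturbation $c'_i = c_i + \eps^i$. Your proposal is correct and in fact reproduces the standard Eisenbrand--Shmonin pigeonhole argument (distinct $0/1$-vectors on $\supp(x^*)$ with the same image under $A$ yield a $\pm1$ kernel vector $z$, and both $x^*\pm z$ stay feasible) together with exactly the same perturbation trick the paper sketches in the footnote. So your approach is essentially the paper's intended one, just fully spelled out rather than cited.

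One minor remark: your final paragraph verifies $\Gamma \ge m\log(2\Gamma\Delta+1)$ only up to $O(\cdot)$ slack, which is not quite a proof that the \emph{specific} constant in $\Gamma = 2(m+1)(\log(m+1)+\log\Delta+2)$ works. Since the constant is taken verbatim from~\cite{caratheodory06}, this is fine to defer there; but if you want the argument to be self-contained, note that it suffices to apply the pigeonhole to any subset $J'\subseteq\supp(x^*)$ of size $\lceil\Gamma\rceil+1$ rather than all of $\supp(x^*)$, so you only need the inequality at $k=\lceil\Gamma\rceil+1$ rather than for arbitrarily large $k$, and the monotonicity issue disappears.
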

This property is sometimes used in the following way: one can guess the support of an optimal solution
from the $O(n^\Gamma)$ potential variable sets and then apply Lenstra's algorithm for integer linear programming
in small dimension. This idea was used for example in~\cite{jansen2010eptas}.
Recall that Lenstra's algorithm and the improvement by Kannan can be implemented in polynomial space.
\begin{theorem}[cf.\ Theorem~5.2 in~\cite{frank1987application}]\label{thm:lenstra}
	There is an algorithm that solves
	\begin{equation*}
		\max\{c\T x : Ax \le b, x\in \Z^n\}
	\end{equation*}
	in time $n^{O(n)} \poly(I)$ and polynomial space.
\end{theorem}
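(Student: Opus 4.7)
The plan is to implement the recursive algorithm of Lenstra and Kannan for integer programming in small dimension, executing it depth-first so that the working memory at any moment is dominated by the recursion stack, and ensuring that all intermediate numerical data has polynomially-bounded bit-length.

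First I would preprocess by solving the LP relaxation $\max\{c\T x : Ax \le b\}$ using the ellipsoid method, which already runs in polynomial time and space; the infeasible and unbounded cases are handled separately by standard reductions, so we may assume $P = \{x : Ax \le b\}$ is a nonempty bounded polytope. The recursive core is then the following. Apply an LLL-type basis reduction to produce a basis $v_1, \ldots, v_n$ of $\Z^n$ that is reduced with respect to $P$, and invoke Khinchine's flatness theorem with Kannan's sharpened constant: either an integer point of $P$ can be read off directly (for instance when $P$ is sufficiently round around its center), or the lattice width of $P$ in direction $v_n$ is bounded by a function $w(n) = n^{O(1)}$. In the latter case, enumerate the $O(w(n))$ hyperplanes $\{x : v_n\T x = k\}$ meeting $P$, substitute coordinates to obtain, for each $k$, an $(n-1)$-dimensional integer program, recurse, and retain only the best objective value seen. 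The recurrence $T(n) \le w(n) \cdot T(n-1) + \poly(I)$ yields the claimed $n^{O(n)}\poly(I)$ running time.

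The main obstacle is controlling the bit-lengths of the intermediate quantities --- the reduced bases, the transformed constraint matrices $A'$, the right-hand sides $b'$, and the objective $c'$ --- so that they do not blow up across the $n$ recursion levels. This is precisely what Frank and Tardos establish: they give an implementation of LLL working with simultaneous diophantine approximations of polynomial bit-length, and, crucially, they show that any rational objective vector $c$ can in polynomial time be replaced by an integer vector $c'$ of bit-length $\poly(n)$ that induces the same ordering on all integer vectors of the relevant bounded norm. Thus at every recursion level the subproblem has a polynomial-size description. Processing the $w(n)$ sibling subproblems sequentially (keeping only a running maximum) together with a stack of depth $n$, each frame of polynomial size, gives the overall polynomial-space bound.
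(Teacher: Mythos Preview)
The paper does not supply its own proof of this theorem; it is quoted as a black-box result from Frank and Tardos (their Theorem~5.2) and used without further argument. Your sketch is a faithful outline of what that reference actually does---Lenstra--Kannan recursion via flatness, processed depth-first, with the Frank--Tardos simultaneous diophantine approximation keeping all intermediate descriptions to polynomial bit-length---so it is consistent with the citation and there is nothing in the paper to compare it against.
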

Note that~\eqref{eq:ilp} can be translated to the form in \Cref{thm:lenstra} without increasing the number
of variables, hence the same result holds there.
Together with the bound on $\Gamma$, this yields
a running time of $n^\Gamma \cdot \Gamma^\Gamma \cdot \poly(I) \le (m\Delta)^{O(m^2 \log{\Delta})} \cdot \poly(I)$ and polynomial space, using that without loss of generality all columns of $A$ are different and, in particular,
$n \le (2\Delta + 1)^m$.

For our result we introduce two new ideas.
In~\cref{sec:branching} we introduce the following branching algorithm that can be seen as an application of
Savitch's theorem to the support of the solution. Consider
the lexicographically minimal solution $x^\ast$ and let $x^\ast = x^\ast_\ell +
x^\ast_r$ be such that $x^\ast_\ell$ and $x^\ast_r$ have half the support each.
We will then guess a vector $b_\ell \in \Z^m$ such
that $Ax^\ast_\ell = b_\ell$ and for each such guess, recurs into subproblems
with the right-hand sides $b_\ell$ and $b - b_\ell$ and then retrieve both $x_\ell^\ast$
and $x_r^\ast$. Note, that in each recursive call, the support halves,
so the total depth of the recursion is $O(\log{\Gamma}) =
O(\log m + \log\log{\Delta})$. 
This idea already yields a $(m\Delta)^{O(m^2 \log\log{\Delta})} \poly(I)$ time and
polynomial space algorithm. However, the dependence on the number
of constraints $m$ is much higher than what is promised in~\cref{thm:main}.
The bottleneck of the earlier mentioned idea of guessing the support and applying Lenstra's algorithm
is that one enumerates $O(n^\Gamma)$ support sets.
In~\cref{sec:guessing} we show that the number of relevant support sets is much
lower and that they can be enumerated using the branching algorithm as a subroutine
on small instances where it runs more efficiently. This is then combined with running
Lenstra's algorithm on each support set.

\section{Branching algorithm}\label{sec:branching}
As a subroutine in~\cref{thm:main}, we will first introduce an algorithm based on the idea
of branching on the support, which has the following properties.
\begin{lemma}\label{lem:simple}
	Let $A \in \Z^{m \times n}$, $b \in \Z^m$, $c \in \Z^n$, and $\Delta \ge \norm{A}_\infty$.
	Assume that~\eqref{eq:ilp} is bounded and feasible and let $x^*$ be
	the lexicographically minimal optimal solution.

	For any $\sigma\in \mathbb N$ we can in time $(\sigma\Delta)^{O(m \log \Gamma)} \cdot \poly(n)$
	and polynomial space, compute $x^*$ under the promise that $\norm{x^*}_1 \le \sigma$. If this promise is not satisfied, there is no guarantee of correctness.
\end{lemma}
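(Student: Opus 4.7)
The plan is to implement the ``branching on support'' idea from \Cref{sec:techniques}: for a vector $v\in\ZZ^n$ with $\supp(v)=\{i_1<\cdots<i_s\}$, define the \emph{median split} $v=v_\ell+v_r$ by letting $v_\ell$ agree with $v$ on $\{i_1,\ldots,i_{\lceil s/2\rceil}\}$ and vanish elsewhere, so that $\norm{v_\ell}_0=\lceil s/2\rceil$, $\norm{v_r}_0=\lfloor s/2\rfloor$, and $\supp(v_\ell)\cap\supp(v_r)=\emptyset$. We enumerate guesses of $(b_\ell,\sigma_\ell)=(Av_\ell,\norm{v_\ell}_1)$ for the unknown optimum and recurse on each half; since \Cref{caratheodory} gives $\norm{x^*}_0\le\Gamma$, the halving of the support bottoms out after $O(\log\Gamma)$ levels.

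Formally, I would define a routine $P(b,k,\sigma')$ that returns the lexicographically smallest solution of minimum $c^\top$-value in
\[
F(b,k,\sigma')=\{x\in\ZZ^n : Ax=b,\ \norm{x}_0\le k,\ \norm{x}_1\le\sigma'\},
\]
or $\bot$ if $F(b,k,\sigma')=\emptyset$. The top-level call $P(b,\Gamma,\sigma)$ returns $x^*$: indeed $x^*\in F(b,\Gamma,\sigma)$ by \Cref{caratheodory} and the promise, and $x^*$ being the lex-min optimum of~\eqref{eq:ilp} makes it also the lex-min optimum of any subset of feasible solutions containing it. The base case $k\le 1$ inspects the at most $n+1$ candidates $0$ and $te_i$ ($i\in[n]$, $1\le t\le\sigma'$) satisfying $tA_i=b$. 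The recursive case loops over all $b_\ell\in\Z^m$ with $\norm{b_\ell}_\infty\le\sigma'\Delta$ and all integers $\sigma_\ell\in\{0,\ldots,\sigma'\}$, performs the two recursive calls
\[
y_\ell:=P(b_\ell,\lceil k/2\rceil,\sigma_\ell),\qquad y_r:=P(b-b_\ell,\lceil k/2\rceil,\sigma'-\sigma_\ell),
\]
and, whenever both are non-$\bot$ and $\norm{y_\ell+y_r}_0\le k$, considers $y_\ell+y_r$ as a candidate (the bound $\norm{y_\ell+y_r}_1\le\sigma'$ is automatic because the half-budgets sum to $\sigma'$); it returns the best such candidate.

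Correctness is by induction on $k$, with inductive statement ``$P(b,k,\sigma')$ returns the lex-min optimum of $F(b,k,\sigma')$''. For the inductive step let $\hat x$ be that optimum, split it as $\hat x=\hat x_\ell+\hat x_r$ at the median of its support, and set $(\hat b_\ell,\hat\sigma_\ell):=(A\hat x_\ell,\norm{\hat x_\ell}_1)$. The key exchange argument is: if some $y_\ell\in F(\hat b_\ell,\lceil k/2\rceil,\hat\sigma_\ell)$ were strictly better than $\hat x_\ell$ (smaller $c^\top$, or equal $c^\top$ and lex-smaller), then $y_\ell+\hat x_r$ would still lie in $F(b,k,\sigma')$---the support bound from $\lceil k/2\rceil+\lfloor\norm{\hat x}_0/2\rfloor\le k$, the $\ell_1$-bound from $\hat\sigma_\ell+(\sigma'-\hat\sigma_\ell)=\sigma'$---and would itself be strictly better than $\hat x$. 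In the tie-breaking case, the first index $i^*$ at which $y_\ell,\hat x_\ell$ disagree must satisfy $y_{\ell,i^*}<\hat x_{\ell,i^*}$, so nonnegativity forces $\hat x_{\ell,i^*}>0$, hence $i^*\in\supp(\hat x_\ell)$ and (by disjointness of the median split) $i^*\notin\supp(\hat x_r)$; adding $\hat x_r$ therefore preserves the strict lex inequality at $i^*$. This contradicts the minimality of $\hat x$, so $\hat x_\ell$ is the lex-min optimum of its sub-$F$ and by the inductive hypothesis $P(\hat b_\ell,\lceil k/2\rceil,\hat\sigma_\ell)=\hat x_\ell$. A symmetric argument handles the right call, so the iteration with $(\hat b_\ell,\hat\sigma_\ell)$ produces $\hat x$; by the same inductive hypothesis no other iteration can yield a strictly better candidate in $F(b,k,\sigma')$.

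For the resources, each node branches over $(\sigma'+1)(2\sigma'\Delta+1)^m=(\sigma\Delta)^{O(m)}$ pairs $(b_\ell,\sigma_\ell)$, and $k$ halves from $\Gamma$ down to $1$ in $\lceil\log\Gamma\rceil=O(\log m+\log\log\Delta)$ levels, so the total number of invocations is $(\sigma\Delta)^{O(m\log\Gamma)}$; each performs only $\poly(n,m,\log\sigma,\log\Delta)$ work to add two vectors and check feasibility. A depth-first execution keeps only the $O(\log\Gamma)$ active frames on the stack, each holding one vector of length $n$ plus a handful of scalars, for polynomial space overall. I expect the lexicographic step of the exchange argument to be the main subtlety, since it crucially depends on nonnegativity of the variables and on the disjointness $\supp(\hat x_\ell)\cap\supp(\hat x_r)=\emptyset$ arising from the median split.
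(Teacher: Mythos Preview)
Your approach is essentially the paper's: branch on the right-hand side, halving a support parameter at each level, with depth $O(\log\Gamma)$. The extra enumeration over $\sigma_\ell$ and the explicit post-check $\norm{y_\ell+y_r}_0\le k$ are not in the paper (which treats both bounds as \emph{promises} on the lex-min optimum of the unrestricted ILP rather than as constraints defining $F$), but they are harmless for the stated running time.

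There is one concrete slip. You recurse with $\lceil k/2\rceil$ on \emph{both} sides, and then claim ``a symmetric argument handles the right call.'' It does not. For the left exchange you use
\[
\norm{y_\ell+\hat x_r}_0 \le \lceil k/2\rceil + \lfloor \norm{\hat x}_0/2\rfloor \le \lceil k/2\rceil+\lfloor k/2\rfloor = k,
\]
which is fine. The symmetric bound, however, reads
\[
\norm{\hat x_\ell+y_r}_0 \le \lceil \norm{\hat x}_0/2\rceil + \lceil k/2\rceil,
\]
and when $k$ is odd with $\norm{\hat x}_0=k$ this is $k+1$, so $\hat x_\ell+y_r$ need not lie in $F(b,k,\sigma')$ and the contradiction does not follow. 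Consequently you cannot conclude $y_r=\hat x_r$, and the candidate for the ``correct'' guess may be discarded by your own support check. The fix is the one the paper uses: pass $\lfloor k/2\rfloor$ to one side and $\lceil k/2\rceil$ to the other, so that the support bound in both exchange arguments is $\lceil k/2\rceil+\lfloor k/2\rfloor=k$. Alternatively, drop the support constraint from $F$ altogether and keep $k$ only as a promise controlling recursion depth, exactly as the paper does; then the exchange arguments need no support bookkeeping at all, and the separate enumeration of $\sigma_\ell$ becomes unnecessary as well.
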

Note that the running time matches~\cref{thm:main} except for the dependence on $\sigma$.
There are known bounds on $h$, see e.g.,~\cite{papadimitriou1982complexity}, and
after the preprocessing in~\cite{jansen23} one can guarantee a bound of $\sigma = (m\Delta)^{O(m)}$.
This would therefore give a polynomial space algorithm with running time $(m\Delta)^{O(m^2\log\Gamma)}\poly(I)$,
but this is larger than the running time claimed in~\cref{thm:main}. We will
apply it more efficiently in \Cref{sec:guessing}.

\begin{algorithm}[ht!]
    \DontPrintSemicolon
	\textbf{function} $\text{branch}(A,b,c,\sigma,s)$:\\
	\If(\tcp*[f]{Base case}){$s \le 1$}{\Return 
lexicographically minimal optimal solution to  $\min \{c\T x : x \in \ZZ^n, Ax = b, \norm{x}_0 = s\}$ or $\bot$ if none exists} %\tcp*{$O(d n)$ time}} \\
	$z \gets \bot$ \\
\ForEach{$b_\ell \in \{-\sigma\Delta,\dotsc,\sigma\Delta \}^m$}{\label{ln:guess}
		$x_\ell \gets \text{branch}(A,b_\ell,c,\sigma,\floor{s/2})$\\
		$x_r \gets \text{branch}(A,b - b_\ell,c,\sigma,\ceil{s/2})$\\
		\If{$x_\ell + x_r$ is \emph{better} than $z$}{
			$z \gets x_\ell + x_r$
		}
	}
	\Return $z$
	\caption{Pseudocode of the algorithm behind~\cref{lem:simple}. Input values $A,b,c,\sigma$ are defined as in~\cref{lem:simple}. Parameter $s\in\mathbb N$ denotes a bound on the support. The condition that $x_\ell + x_r$ is \emph{better} than $z$ means: 
		$x_\ell \neq \bot$ and $x_r \neq \bot$ and $(z = \bot$ or $c\T (x_\ell + x_r) < c\T z$ or $(c\T (x_\ell + x_r) = c\T z$ \text{ and } $x_\ell + x_r \lex z))$.
	}
	\label{alg:main}
\end{algorithm}
Now, we present the algorithm behind~\cref{lem:simple}. For the pseudocode, see~\cref{alg:main}.
This recursive algorithm has an additional parameter $s$, which is a bound on the support. 
On the initial call we use the bound $\Gamma$ from \cref{caratheodory}.
The base case consists of the
scenario when $s \le 1$. If $s = 0$, we return the all-zero vector if and only
if $b$ is an all-zero vector; otherwise, we return $\bot$. If $s = 1$ let $e_i$, $i
\in [n]$, be the indicator vector, i.e., a vector with 1
at position $i$ and 0 otherwise. We consider the set $\{z \cdot e_i : z \cdot A e_i = b, z
\in \Z, i \in [n]\}$ and return the last element
that minimizes the dot product with $c\T$ (or $\bot$ if no such element exists).
Observe that for each $i$, such an integer $z$ can be determined in $O(m)$
time; hence, in total, the base case can be decided in $O(mn)$ time.

When $s > 1$, we intuitively split the support of the solution into two (almost) equal-size parts
and guess the target $b_\ell$ for the first
part in the original solution. More precisely, we guess a
vector $b_\ell \in \{-\sigma\Delta, \ldots, \sigma\Delta \}^m$, recursively compute
$x_\ell$ and $x_r$, which are lexicographically minimal optimal solutions to $A x_\ell = b_\ell$ and $A x_r = b -
b_\ell$ with half the support of the previous call (rounded up for one recursion and down for the other),
and if both of them exist,
return $x_r + x_\ell$, breaking ties according to lexicographical order.
If all of the guesses for $b_\ell$ fail in at least one recursion, we return $\bot$.
This concludes the description of the algorithm.

\begin{proof}[Proof of~\Cref{lem:simple}]
For the running time, observe
that the integer $s$ decreases by a constant factor in each recursive call and starts with
	value $\Gamma$.
Thus, the depth of the recursion is $O(\log{\Gamma})$. The number of possible $b_\ell$
vectors at each recursion depth is $(2 \sigma\Delta + 1)^m$, so the total
size of the branching tree is $(\sigma\Delta)^{O(m \log{\Gamma})}$.
The base case takes $O(mn)$ time; hence, the total running time is as claimed.
	Similarly, the space complexity is $\poly(n, \log \Gamma)$ as the algorithm needs to store the
stack of the recursion of height $O(\log{\Gamma})$; the base case requires
	$\poly(n)$ space.

Next, we argue about the correctness. We prove by induction over $s$ that
$\text{branch}(A,b,c,\sigma,s)$ returns the lexicographically minimal optimal solution $x^* \in \ZZ^n$ to
\eqref{eq:ilp}, provided that $\norm{x^*}_1 \le \sigma$ and $\norm{x^*}_0 \le s$.

In the base case, $s \le 1$, the
algorithm straightforwardly checks over all possible non-zero coefficients of
$x^*$ and returns a lexicographically minimal one. Note that if $b$ is the all-zero vector then
assuming the integer program is bounded there cannot be a non-zero solution with negative objective
value. Hence, the all-zero solution is optimal and lexicographically minimal.

Now assume that $s > 1$. Let $x^\ast \in \ZZ^n$ be the lexicographically
minimal solution to~\eqref{eq:ilp} and assume that $\norm{x^*}_1 \le \sigma$ and $\norm{x^*}_0 \le s$. Let $x_\ell^\ast, x_r^\ast \in \ZZ^n$ be
arbitrary vectors such that $x_\ell^\ast + x_r^\ast = x^\ast$, $\norm{x_\ell}_0
\le \floor{\norm{x^\ast}_0/2}$, and $\norm{x_r^\ast}_0 \le
\ceil{\norm{x^\ast}_0/2}$.  For example, $x_\ell^\ast$ could be the first half
of the non-zero coefficients of $x^\ast$, and $x_r^\ast$ could be the remaining
non-zero coefficients.  Finally, let $b^\ast_\ell = Ax^\ast_\ell$. Observe
that $\norm{x^*_{\ell}}_1 \le \norm{x^*}_1 \le \sigma$, $\norm{x^*_{r}}_1 \le \norm{x^*}_1 \le \sigma$, and $b^\ast_\ell \in \{-\sigma\Delta,\dotsc,\sigma\Delta \}^m$.
Furthermore, $x^*_{\ell}$ and $x^*_{r}$ must be lexicographically minimal optimal solutions
for right-hand sides $b^\ast_{\ell}$ and $b - b^\ast_{\ell}$ since otherwise this would
contradict $x^\ast$ being the lexicographically minimal optimal solution for $b$.

Hence, if in~\cref{ln:guess} we guess $b_\ell = b_\ell^\ast$, the induction hypothesis guarantees
that $\text{branch}(A,b_\ell,c,\sigma,\floor{s/2})$ and
$\text{ILP}(A,b-b_\ell,c,\sigma,\ceil{s/2})$ return $x^*_{\ell}$ and $x^*_r$.
In particular, $x^* = x^*_{\ell} + x^*_r$ must be returned by the algorithm, since no
lexicographically smaller optimal solution exists that could dominate $x^*$.
\end{proof}

\section{Candidate support sets and main theorem}
\label{sec:guessing}

A naive enumeration of all support sets of cardinality at most $\Gamma$ would require
$O(n^\Gamma)$. In this section we will design a more efficient enumeration and prove \Cref{thm:main}.

\begin{definition}[Candidate support set]
	\label{def:candidates}
	For a matrix $A \in \Z^{m \times n}$, we say that a set $\mathcal{C} \subseteq \{0,1\}^n$ is a \emph{candidate support set} if:
	\begin{enumerate}[label=(\roman*)]
		\item\label{item:p1} $\norm{x}_1 \le \Gamma$ for every $x \in \mathcal{C}$, and
		\item\label{item:p2} for every $b \in \Z^m$ and $c \in \Z^n$ such that \eqref{eq:ilp} is bounded and feasible there exists an optimal solution $x^\ast$ to \eqref{eq:ilp} with $\supp(x^\ast) \in \mathcal{C}$.
	\end{enumerate}
\end{definition}

\begin{lemma}\label{lem:candidates}
	For every $A \in \ZZ^{m\times n}$ with $\norm{A}_\infty \le \Delta$, there exists a candidate support set of size $(m \Delta)^{O(m)}$. Moreover, this set can be enumerated in time $(m \Delta)^{O(m \log{\Gamma})} \cdot \poly(I)$ and polynomial space.
\end{lemma}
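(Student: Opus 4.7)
The plan is to construct $\mathcal{C}$ by invoking the branching procedure of \Cref{lem:simple} on a carefully chosen family of sub-instances whose outputs collectively form a valid candidate set.

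First I would preprocess $A$: without loss of generality all columns of $A$ are distinct (merge duplicates by summing variables), so $n \le (2\Delta+1)^m = \Delta^{O(m)}$; then apply the preprocessing of~\cite{jansen23} to ensure that for every $(b,c)$ making~\eqref{eq:ilp} bounded and feasible, the lex-min optimal solution $x^\ast$ satisfies $\norm{x^\ast}_1 \le \sigma := (m\Delta)^{O(m)}$. The support of such an $x^\ast$ then depends only on $b' := Ax^\ast \in \{-\sigma\Delta,\dotsc,\sigma\Delta\}^m$ and on $c$, because $x^\ast$ equals the lex-min optimum of the sub-ILP with right-hand side $b'$.

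The enumeration iterates over a set $\mathcal F$ of $(m\Delta)^{O(m)}$ pairs $(b',c')$ and, for each, runs $\mathrm{branch}(A,b',c',\sigma,\Gamma)$ from \Cref{alg:main}, adding the support of the returned vector (when non-$\bot$) to $\mathcal{C}$. A single branching call takes time $(\sigma\Delta)^{O(m\log\Gamma)} = (m\Delta)^{O(m\log\Gamma)}$ and polynomial space, which is reused across iterations. If $|\mathcal F| = (m\Delta)^{O(m)}$ and every lex-min optimum's support is produced from some $(b',c') \in \mathcal F$, then $|\mathcal{C}| \le (m\Delta)^{O(m)}$ and the total time is $(m\Delta)^{O(m)} \cdot (m\Delta)^{O(m\log\Gamma)} = (m\Delta)^{O(m\log\Gamma)}$, giving the claimed bounds. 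Polynomial space follows from streaming the resulting supports to the output rather than buffering them.

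The hard part is designing $\mathcal F$ to be simultaneously small and complete. I would approach this via an Eisenbrand--Weismantel-type proximity decomposition $x^\ast = x^{\mathrm{LP}} + \delta$ with $\norm{\delta}_1 \le (m\Delta)^{O(m)}$, so that $b' = Ax^{\mathrm{LP}} + A\delta$ splits into an LP-basic contribution and a bounded correction. The combinatorial core is to compress the resulting (basis, $\delta$) product down to $(m\Delta)^{O(m)}$ distinct entries --- for instance by picking a canonical representative per distinct support, or by arguing that many choices of basis or correction yield the same support and so can be merged. An alternative route worth investigating is to enumerate $\mathcal F$ directly via a recursion mirroring the branching tree of \Cref{alg:main}: each recursion level guesses $b_\ell$ at an internal node, but outputs are hashed by support and deduplicated so that the total number of distinct supports emitted matches the target bound even though the total work is $(m\Delta)^{O(m\log\Gamma)}$.
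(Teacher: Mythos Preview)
Your proposal has a genuine gap: the arithmetic in your running-time claim is off by a factor of $m$ in the exponent, and this is not a typo but the symptom of a missing idea. You set $\sigma = (m\Delta)^{O(m)}$ and then assert that a single branching call costs $(\sigma\Delta)^{O(m\log\Gamma)} = (m\Delta)^{O(m\log\Gamma)}$. But with $\sigma = (m\Delta)^{O(m)}$ one actually gets $(\sigma\Delta)^{O(m\log\Gamma)} = (m\Delta)^{O(m^2\log\Gamma)}$, which is exactly the running time the paper already notes one obtains by using \Cref{lem:simple} na\"ively. Likewise, the natural range for $b' = Ax^\ast$ has $(2\sigma\Delta+1)^m = (m\Delta)^{O(m^2)}$ elements, so even before worrying about $c'$ you cannot get $|\mathcal F| = (m\Delta)^{O(m)}$ by enumerating such $b'$. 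Your proposed fixes (proximity decompositions, hashing supports along the recursion tree) do not address this, because the blow-up comes from $\sigma$ being exponential, not from overcounting.

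The idea you are missing is \Cref{lem:2} (due to Deng, Mao, and Zhong for Unbounded Knapsack): if $x^\ast$ is the lex-min optimum, then its \emph{support indicator} $x' \in \{0,1\}^n$ is itself the lex-min optimum for the same objective $c$ and right-hand side $b' = Ax'$. Since $\norm{x'}_1 = \norm{x^\ast}_0 \le \Gamma$, one can run \Cref{lem:simple} with $\sigma = \Gamma$ rather than $\sigma = (m\Delta)^{O(m)}$, and one only needs to enumerate $b'$ with $\norm{b'}_\infty \le \Gamma\Delta$, of which there are $(2\Gamma\Delta+1)^m = (m\Delta)^{O(m)}$. No enumeration over $c'$ is needed at all. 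With $\sigma = \Gamma$ each branching call genuinely costs $(\Gamma\Delta)^{O(m\log\Gamma)} = (m\Delta)^{O(m\log\Gamma)}$, and the candidate set is simply the $0/1$ outputs of these calls, automatically of size $(m\Delta)^{O(m)}$.
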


We note that, with \cref{lem:candidates}, the proof of \cref{thm:main} is a straightforward application of known results.

\begin{proof}[Proof of \cref{thm:main} assuming \cref{lem:candidates}]
	We enumerate a candidate support set $\mathcal{C}$ of $A$ using \cref{lem:candidates}. Then for every $c \in \mathcal{C}$, we consider the instance truncated to variables in $\supp(c)$ and solve it using the exact algorithm for integer programming by Lenstra and Kannan, i.e., \Cref{thm:lenstra}.

	The correctness follows from Property~\ref{item:p2} of $\mathcal{C}$. For the running time, note that by \cref{lem:candidates} we have $|\mathcal{C}| \le (m \Delta)^{O(m)}$ and it requires overhead of $(m \Delta)^{O(m \log \Gamma)} \poly(I)$ to enumerate $\mathcal C$. Recall that the algorithm of Lenstra and Kannan~\cite{kannan1983improved,lenstra1983integer} solves \eqref{eq:ilp} in $\Gamma^{O(\Gamma)} \poly(I)$ time and polynomial space for at most $\Gamma$ variables. In total, the running time of the algorithm is
	\begin{align*}
		(m \Delta)^{O(m \log \Gamma)} \poly(I) + |\mathcal{C}| \cdot \Gamma^{O(\Gamma)} \cdot \poly(I)
		&\le (m\Delta)^{O(m \log \Gamma)} \cdot \Gamma^{O(\Gamma)} \cdot \poly(I) \\
		&\le (m\Delta)^{O(m (\log m + \log\log \Delta))} \cdot \poly(I). \qedhere
	\end{align*}
\end{proof}
It remains to prove \cref{lem:candidates}.
This relies on the following property that Deng, Mao, and Zhong~\cite{DengMZ23} noticed for Unbounded Knapsack
and which easily generalizes to ILP.
\begin{lemma}\label{lem:2}
	Let $x^* \in \ZZ^n$ be the lexicographically minimum solution to \eqref{eq:ilp}. Then, there exists $x' \in \{0,1\}^n$ and $b' \in \Z^m$ such that (i) $x'$ is the lexicographically minimal optimal solution to $\min\{c\T x' : A x' = b'\}$ and (ii) $\supp(x') = \supp(x^*)$.
\end{lemma}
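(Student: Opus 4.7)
The plan is to take the most natural candidate: let $x' := \mathbf{1}_{\supp(x^*)}$ denote the $\{0,1\}$-indicator vector of $\supp(x^*)$, and set $b' := Ax'$. Property~(ii) then holds by construction, so the whole task is to verify~(i), namely that $x'$ is the lexicographically minimum optimal solution to $\min\{c\T x : Ax = b',\, x \in \ZZ^n\}$.

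My plan for~(i) is a one-step exchange argument. Suppose for contradiction that some $y \in \ZZ^n$ with $Ay = b'$ either (a) strictly beats $x'$ in objective, $c\T y < c\T x'$, or (b) ties and is lex-smaller, $c\T y = c\T x'$ and $y \lex x'$. Set
\[
z := x^* - x' + y.
\]
The key observation is that $x^* \ge x'$ coordinatewise, because $x^*$ is integral with $x^*_i \ge 1$ on $\supp(x^*)$ and $0$ outside, while $x'$ is $0/1$ on the same support. Hence $x^* - x' \in \ZZ^n$ and therefore $z \in \ZZ^n$. Moreover $Az = b - b' + b' = b$ and $c\T z = c\T x^* + (c\T y - c\T x')$, so $z$ is a feasible solution to the original ILP.

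In case~(a) this gives $c\T z < c\T x^*$, contradicting optimality of $x^*$. In case~(b), $c\T z = c\T x^*$, so $z$ is also optimal, and at the first coordinate $i$ where $y_i \ne x'_i$ (necessarily $y_i < x'_i$), we have $z_j = x^*_j$ for all $j < i$ and $z_i < x^*_i$, i.e.\ $z \lex x^*$, contradicting the lex-minimality of $x^*$ among optima. Ruling out both cases shows $x'$ is itself optimal and lex-minimal among optima. The only minor subtlety is that ``lex-minimal optimal'' requires the new ILP to be bounded, but unboundedness depends only on whether there exists $v \in \ZZ^n \setminus\{0\}$ with $Av = 0$ and $c\T v < 0$, which is independent of the right-hand side, so boundedness is inherited from the original ILP. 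I do not anticipate a real obstacle: the entire content is the exchange step above, which follows mechanically once $x'$ is chosen to be the $0/1$ indicator of $\supp(x^*)$.
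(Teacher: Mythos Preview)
Your proof is correct and follows essentially the same approach as the paper: take $x'$ to be the indicator vector of $\supp(x^*)$, set $b' = Ax'$, and use the exchange $z = x^* - x' + y$ to derive a contradiction with the lex-minimal optimality of $x^*$. If anything, you are slightly more careful than the paper: you explicitly verify $z \ge 0$ via $x^* \ge x'$, and you address boundedness of the auxiliary ILP, both of which the paper glosses over.
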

\begin{proof}
	Let $x'$ be the indicator vector of $\supp(x^*)$ and let $b' = Ax'$. Clearly, Property~(ii) holds. Hence, it remains to prove (i). For the sake of contradiction, assume that $y$ is a solution to the integer program with right-hand side $b'$ and $c\T y < c\T x'$
Consider $z := y + x^* - x'$. Then $z$ is a solution to $\{Ax = b', x\in\ZZ^n\}$. But $c\T z < c\T x^*$, which contradicts the optimality of $x^*$.
Similarly, if there is a solution $y$ with $c\T y = c\T x'$ and $y \lex x'$. Then
$z = y + x^* - x'$ is an optimal solution to $\min\{c\T x, Ax = b', x\in\ZZ^n\}$ that is lexicographically smaller than $x^*$.
\end{proof}
\begin{proof}[Proof of \cref{lem:candidates}]
First, we iterate over every $b' \in \ZZ^m$ such that $\norm{b'}_\infty \le \Gamma \cdot \Delta$. Note that the number of such $b'$ is $(2\Gamma \Delta + 1)^{m}$. Next, for each such $b'$, we invoke \cref{lem:simple} with $\sigma = \Gamma$, which returns the lexicographically minimal optimal solution $x^*$ to
\begin{equation*}
	\min\{c\T x : Ax = b', x\in\ZZ^n \}
\end{equation*}
	if $\norm{x^*}_1\le \sigma$.
	Finally, for each of these guesses, we output $x$ if $x \in \{0,1\}^n$. 
	This concludes the description of the algorithm. The algorithm runs in time
	\begin{equation*}
		(2\Gamma\Delta+1)^m \cdot (\Gamma \Delta)^{O(m \log \Gamma)} \cdot \poly(I) \le (m \Delta)^{O(m \log \Gamma)} \cdot \poly(I)
	\end{equation*}
	and uses polynomial space.
	Property~\ref{item:p1} of \cref{def:candidates} holds because every $x^*$ considered
	is lexicographically minimal and therefore by \cref{caratheodory} has support at most $\Gamma$. For Property~\ref{item:p2} consider some $b\in\Z^m, c\in \Z^n$ for which~\eqref{eq:ilp} is bounded and feasible. Let $x^*$ be the lexicographically minimal optimal solution. By \cref{caratheodory} 
	we have $\norm{x^*}_0 \le \Gamma$. Furthermore, by \Cref{lem:2} there is some $x'\in\{0,1\}^n$ with $\supp(x') = \supp(x^*)$, which is the lexicographically minimal optimal solution for $b' = Ax'$. Because $\norm{b'}_\infty \le \Gamma\Delta$ and $\norm{x'}_1 \le \sigma$ the algorithm above will consider $b'$ and correctly output $x'$.
\end{proof}

\section{Conclusion}
In this paper, we show that an integer linear program can be solved
in time $O(m\Delta)^{O(m(\log m + \log\log{\Delta})} \poly(I)$
and polynomial space.
It remains open whether the additional factor of $O(\log m + \log\log\Delta)$ in the exponent
compared to the algorithm without space limitation can be avoided.
We note that our algorithm has two bottlenecks that lead to this running time:
 (i) the depth of $O(\log m + \log\log{\Delta})$ for our branching and
(ii) because we use the algorithm of Lenstra and
Kannan that runs in time $N^{O(N)} \poly(I)$ on an instance with $N = O(m \log(m\Delta))$.
Thus, even if the improvement to $(\log{N})^{O(N)} \poly(I)$ by Reis and
Rothvoss~\cite{reis23} could be implemented in polynomial space, this would not immediately
lead to an improvement.
Already for the special case of Unbounded Knapsack it is unclear how to avoid
the exponent of $\log\log\Delta$ and achieve a pseudopolynomial running time.

Finally, we want to mention the case of binary integer linear programs, i.e., where
we substitude the domain constraint for $x\in\{0,1\}^n$ in~\eqref{eq:ilp}.
Here, the best known algorithm runs in time
$(m\Delta)^{O(m^2)}\poly(I)$~\cite{eisenbrand2019proximity} and naive application of
Savitch's theorem yields polynomial
space at significant higher running time of $(m\Delta)^{O(m^3 \log(m\Delta))}\poly(I)$.
It is unclear how to come close to
an exponent of $O(m^2)$ in polynomial space and the ideas in this paper do not seem to be
applicable, since the solutions need not be sparse.

\bibliographystyle{plain}
\bibliography{references}

\end{document}